\newtheorem{thm}{Theorem}
\newtheorem{cor}[thm]{Corollary}
\newtheorem{claim}[thm]{Claim}
\begin{document}

\title{Lower bounds for finding the maximum\newline and minimum elements with k lies}

\maketitle

\oneauthor{%
\href{http://www.cs.elte.hu/~dom}{D\"om\"ot\"or P\'ALV\"OLGYI} 
}{%
\href{http://www.elte.hu/en}{Lor\'and E\"otv\"os University Budapest}\\ \href{http://www.cs.elte.hu/index.html?lang=en}{Institute of Mathematics}
}{%
 \href{mailto:dom@cs.elte.hu}{dom@cs.elte.hu}
}

\short{D. P\'alv\"olgyi}{Finding the maximum and minimum elements with k lies}

\begin{abstract}
In this paper we deal with the problem of finding the smallest and
the largest elements of a totally ordered set of size $n$ using pairwise
comparisons if $k$ of the comparisons might be erroneous where $k$ is a fixed constant.
We prove that at least $(k+1.5)n+\Theta(k)$ comparisons are needed in the worst case thus disproving the conjecture that $(k+1+\epsilon)n$ comparisons are enough.
\end{abstract}

\section{Introduction}
Search problems with lies have been
studied in many different settings (see surveys Deppe \cite{D} and Pelc \cite{P}).
 In this paper we deal with the model when a fixed number, $k$,
of the answers may be false, which we call lies.
There are also several models depending on what kind of questions are allowed as well, the most famous being the R\'enyi-Ulam game.
In this paper we deal with the case when we are given $n$ different elements and we can use pairwise comparisons to decide which element is bigger from the two.

The problem of finding the maximum (or the minimum) element with $k$ lies was first solved by Ravikumar et al. \cite{RGL}. They have shown that $(k+1)n-1$ comparisons are necessary and sufficient.
The topic of this paper is finding the maximum and the minimum.
If all answers have to be correct then the minimum number of
comparisons needed is $\lceil \frac{3n}{2}\rceil -2$ (see
\cite{Po}).
Aigner in \cite{A} proved that $(k+\Theta(\sqrt k))n+\Theta(k)$ comparisons are always sufficient\footnote{He also obtained asymptotically tight results in another model.}.
It was proved by Gerbner et al. \cite{mienk} that if $k=1$, then $\frac{87n}{32}+\Theta(1)$ comparisons are necessary and sufficient. We also made the conjecture that for general $k$, there is an algorithm using only $(k+1+c_k)n$ comparisons where $c_k$ tends to $0$ as $k$ tends to infinity.
Hoffmann et al. \cite{HMOZ} showed that $(k+1+C)n+O(k^3)$ comparisons are sufficient for some absolute constant $C$ (whose value is less than $10$ but no attempts to optimize it were made yet).
Until now the best lower bound on $c_k$ was $\Omega((1+\sqrt 2)^{-k})$ by Aigner \cite{A}.
The main result of this paper is the following theorem.

\begin{thm}\label{main} At least $\left\lceil(k+1.5)(n-1)-0.5\right\rceil=(k+1.5)n+\Theta(k)$ comparisons are needed in the worst case to find the largest and the smallest element if there might be $k$ erroneous answers. 
\end{thm}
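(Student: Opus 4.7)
I propose to establish Theorem~\ref{main} by an adversary argument that generalises Pohl's classical proof of the $\lceil 3n/2\rceil - 2$ bound (the $k=0$ case) to the setting with up to $k$ lies. The adversary maintains, for each element $x$, two counters $a_x$ and $b_x$ recording the number of comparisons so far in which $x$ has been declared the larger and the smaller element, respectively. Because every recorded comparison can be reinterpreted as a lie, a correct algorithm can terminate only in a state where the announced maximum $M$ satisfies $a_M\ge k+1$, $b_M\le k$, the announced minimum $m$ satisfies $b_m\ge k+1$, $a_m\le k$, and every other element $x$ satisfies both $a_x\ge k+1$ and $b_x\ge k+1$ (otherwise one can flip at most $k$ answers and swap an extremum with some candidate).

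\paragraph{Potential function.}
The plan is to choose a potential of the form
\[
\Phi \;=\; \sum_x\Bigl[(k+1-a_x)^{+}+(k+1-b_x)^{+}\Bigr] \;+\; \tau,
\]
where $(t)^{+}=\max(t,0)$ and $\tau$ is a \emph{tournament-structural} correction that contributes roughly $\tfrac12$ for every element which has not yet appeared on both the winning and the losing side of some comparison. The first summation yields the $(k+1)(n-1)$ part of the bound, while $\tau$ is responsible for the extra $\tfrac12(n-1)$; in particular $\Phi_{\mathrm{init}}\ge 2\bigl((k+1.5)(n-1)-\tfrac12\bigr)$ and $\Phi_{\mathrm{final}}$ is $O(k)$ at a successful termination.

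\paragraph{Adversary strategy.}
Faced with a query $\{x,y\}$, the adversary chooses the orientation $x>y$ or $y>x$ that minimises the drop in $\Phi$, subject to remaining consistent with \emph{some} total order in which at most $k$ of the declared outcomes are lies; this global consistency is enforced by maintaining a shadow linear extension in the background. A case analysis over the possible configurations of $(a_x,b_x)$ and $(a_y,b_y)$ shows that the adversary can always guarantee $\Delta\Phi\ge -2$. Dividing the total drop by the per-query bound $2$ yields $T\ge \lceil(k+1.5)(n-1)-0.5\rceil$.

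\paragraph{Main obstacle.}
The hard part will be the fine-grained engineering of $\tau$ so that the bound $\Delta\Phi\ge -2$ holds in every state configuration. The delicate cases are those in which an element crosses the boundary from ``max-candidate only'' ($a_x\ge 1$, $b_x=0$) to ``fully processed'' ($a_x\ge 1$, $b_x\ge 1$), where a naive bonus function can drop by $\tfrac52$ rather than $2$; balancing the bonus so that a winning comparison between two max-candidates does not over-spend the potential, while still extracting the full $\tfrac12$ per element across the entire algorithm, is the core creative step and the main departure from Pohl's $k=0$ argument. Simultaneously, I must verify that the greedy ``minimise $|\Delta\Phi|$'' answers remain jointly consistent with at most $k$ lies throughout the execution—a non-trivial compatibility check, since the adversary's local choices must be realisable by a single linear order in hindsight.
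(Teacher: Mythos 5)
There is a genuine gap: your write-up is a plan whose central object, the correction term $\tau$, and whose central inequality, $\Delta\Phi\ge -2$, are exactly the parts left unconstructed -- and as calibrated they cannot both hold. To reach $(k+1.5)(n-1)$ by ``divide the total drop by $2$'' you need $\tau_{\mathrm{init}}\approx n$, i.e.\ roughly a unit of bonus per untouched element; but then a comparison between two previously untouched elements unavoidably costs $2$ from the counter sum (the winner's $(k+1-a)^{+}$ and the loser's $(k+1-b)^{+}$ each drop by $1$, whatever the adversary answers) \emph{plus} a positive amount of freshness bonus, so the uniform per-query bound of $2$ fails on precisely the $\lceil n/2\rceil$ queries that are supposed to generate the extra $n/2$. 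Any fix must treat first-touch comparisons non-uniformly, which is no longer the argument you describe. A second, independent problem: your termination conditions ($a_x,b_x\ge k+1$ for every non-extremal $x$, etc.) are justified by ``flip at most $k$ answers and swap an extremum with a candidate,'' but that reinterpretation is only valid when the recorded answers are globally consistent (acyclic); you let the adversary's shadow order already absorb up to $k$ lies, after which flipping the $\le k$ comparisons lost by a candidate may exceed the lie budget. Finally, the greedy ``minimise the drop'' adversary is never shown to be consistent or even to be a good strategy.

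For contrast, the paper avoids amortisation altogether. Its adversary answers with \emph{zero} lies and uses a structural rule: when an element is first compared it is placed into $TOP$ (if it won) or $BOTTOM$ (if it lost), and every later cross comparison is answered $TOP$-beats-$BOTTOM$. Then the maximum must lie in $TOP$, the minimum in $BOTTOM$, and -- this is the key point your uniform accounting misses -- a comparison can raise the in-degree of a $TOP$ vertex only if both endpoints are already in $TOP$. Hence three pairwise disjoint families of comparisons are forced: at least $\lceil n/2\rceil$ first-touch comparisons, at least $(k+1)(n_1-1)$ inside $TOP$, and at least $(k+1)(n_2-1)$ inside $BOTTOM$, giving $\lceil n/2\rceil+(k+1)(n-2)$; since the answers are consistent, the paper's Claim on ``$k$ more questions'' then supplies the final $k+1$, yielding $\left\lceil(k+1.5)(n-1)-0.5\right\rceil$. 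If you want to salvage a potential-function formulation, it would have to encode this class structure (crediting in-degree only for within-class edges and capping the credit of first-touch queries), at which point it is a repackaging of the paper's disjoint count rather than the uniform $\Delta\Phi\ge-2$ scheme you propose.
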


This bound is tight for $k=0$ (see Theorem \ref{max}) but not for $k=1$ as shown in \cite{mienk} and using a slightly more involved argument than the one presented here it is easy to see that the bound can be simply improved for any $k\ge 1$. The reason why the theorem is presented in this ``weak'' form is that it already disproves the conjecture and the argument is simply, yet gives a perfectly matching bound for $k=0$. To find a stronger version would involve a thorough case analysis, similar to the one in \cite{mienk} and improving the constant a bit is uninteresting at the moment. It would be more interesting to study the behavior of $c_k$ in future works. Now we know that $1.5\le c_k\le C\sim 10$. But is $c_k$ monotonously increasing as $k$ grows? This would imply, of course, the existence of a limit, which is likely to exist. 

The rest of the paper is organized as follows. In Section $2$ we develop a method to increase the lower bound by $k$ for many search problems and give proofs using it for some known results. In Section $3$ we prove our main result, Theorem \ref{main}.

\section{k more questions}
In this section, as a warm-up, we prove a very general result that holds for all search problems and generally gives an additional constant to the lower bounds that are proved using a consistent adversary.

\begin{claim}\label{kmore} Suppose we have a search problem where we want to determine the value of some function $f$ using (not necessarily yes-no) questions from a family of allowed questions.
The answers are given by an adversary who can lie at most $k$ times.
Suppose that we have already asked some questions and the answers we got are consistent, i.e. it is possible that none of them is a lie.
If we do not yet know the value of $f$, then we need at least $k+1$ further questions to determine it.
\end{claim}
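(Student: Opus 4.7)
The plan is to design an adversary that keeps two distinct candidates for $f$ simultaneously alive, and then count how many further questions that adversary can absorb before being forced to eliminate one of them.

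To set up the two candidates, I would use the two hypotheses directly. Consistency of the history gives a function $f_1$ for which every past answer was truthful, so $f_1$'s current lie count is $a_0 = 0$. The assumption that $f$ is not yet known gives a second function $f_2 \ne f_1$ still compatible with the history under at most $k$ lies; let $b_0 \le k$ denote the number of lies already charged to $f_2$. Fixing these two candidates, the adversary maintains running lie counts $a$ and $b$ for them (initialized to $0$ and $b_0$) and aims to keep both at most $k$ for as long as possible.

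The adversary's rule for answering a new question $q$ would be: if $f_1(q) = f_2(q)$, return that common value (no lie is charged to either candidate, and the question cannot distinguish $f_1$ from $f_2$); otherwise return the answer that increments the smaller of $a, b$ by exactly one, breaking ties arbitrarily. Under this rule each ``distinguishing'' question contributes exactly one to $a + b$, and the balancing preserves $|a - b| \le 1$. Starting from $a + b = b_0$, both counts stay at most $k$ as long as $b_0 + t \le 2k$, so the adversary survives at least $2k - b_0$ distinguishing questions before being forced out of balance.

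Hence one of the candidates can first be eliminated only on distinguishing question number $2k - b_0 + 1$, which is at least $k + 1$ since $b_0 \le k$; and since non-distinguishing questions yield an answer truthful for both $f_1$ and $f_2$ and thus provide no information separating them, this is also a lower bound on the total number of further questions. The only conceptual ingredient is the balancing idea: the naive adversary that always answers truthfully for $f_1$ would only give the weaker bound $k - b_0 + 1$, which is too small when $b_0 > 0$. Everything else is routine bookkeeping.
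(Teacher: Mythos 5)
Your argument is correct and is essentially the paper's: both proofs fix the two candidates $x$ (the consistent one, zero charged lies) and $y$ (with $b_0\le k$ charged lies and $f(x)\ne f(y)$) and then answer further questions so that both stay within the lie budget for at least $k$ more rounds. The paper's answering rule is even simpler than your balancing rule---it answers every further question according to $y$, so $y$ gains no lies and $x$, starting from zero, survives $k$ questions---while your balancing gives the slightly stronger count $2k-b_0+1$, which is more than the claim needs.
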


This claim has an immediate, quite weak corollary.

\begin{cor} If there is a search problem as in Claim \ref{kmore} with a non-trivial $f$, then we need at least $2k+1$ questions to determine $f$.
\end{cor}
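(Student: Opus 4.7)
The plan is to derive the $2k+1$ lower bound from Claim \ref{kmore} by having an adversary spend the first $k$ questions setting up a state to which the claim applies. If the adversary can answer $k$ questions in such a way that (i) the history remains consistent with zero lies and (ii) the value of $f$ is still not determined, then Claim \ref{kmore} will guarantee at least $k+1$ further necessary questions, yielding a total of $k + (k+1) = 2k+1$.

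The adversary strategy exploits the non-triviality of $f$. Pick two inputs $x_0$ and $x_1$ with $f(x_0) \neq f(x_1)$, and for the first $k$ questions have the adversary answer each question truthfully according to $x_0$. Condition (i) is then automatic, with $x_0$ itself witnessing consistency with zero lies. For condition (ii), note that on any question $q$ the given answer disagrees with $x_1$'s truthful answer only when $q(x_0) \neq q(x_1)$, so the number of ``lies'' charged against the hypothesis ``the input is $x_1$'' grows by at most one per question. After $k$ such questions, $x_1$ is therefore still consistent with the history allowing up to $k$ lies; hence $x_1$ remains a plausible input, and since $f(x_0) \neq f(x_1)$ the algorithm has not yet determined $f$.

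Once both hypotheses of Claim \ref{kmore} are secured after the warm-up of $k$ questions, the claim immediately supplies the $k+1$ further necessary questions, completing the bound. The argument is short and uses Claim \ref{kmore} as a black box; the only delicate point is verifying that the described adversary strategy simultaneously preserves consistency with zero lies and keeps $f$ undetermined throughout the full $k$ preparatory rounds, which I expect to be the main---and rather modest---obstacle.
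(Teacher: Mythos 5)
Your proof is correct. The only difference from the paper is one of organization: the paper proves the Corollary \emph{before} (and independently of) Claim \ref{kmore}, by a direct two-phase adversary --- answer the first $k$ questions according to $x$ and the next $k$ according to $y$, so that after $2k$ questions both inputs remain possible (each charged with at most $k$ lies) --- explicitly remarking that this direct proof is given first because it is a simplified version of the Claim's proof. You instead run only the first phase (answer $k$ questions truthfully according to $x_0$, keeping $x_1$ alive with at most $k$ charged lies) and then invoke Claim \ref{kmore} as a black box for the remaining $k+1$ questions. Since the proof of the Claim is precisely ``answer the next $k$ questions according to $y$,'' your argument unfolds to exactly the paper's adversary; your version is the more natural one if the Claim is taken as already established, while the paper's version is self-contained and doubles as a warm-up for the Claim's proof. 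One small point worth making explicit in your write-up: if the algorithm halts before even asking $k$ questions, both $x_0$ and $x_1$ are trivially still possible, so that degenerate case also cannot determine $f$; with that noted, the bound $k+(k+1)=2k+1$ stands.
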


Although it is not too standard, we first give a proof of the Corollary, as it is a simplified version of the proof of the Claim.

\medskip
\begin{proof} Take two possible elements of the universe, $x$ and $y$, for which $f(x)\ne f(y)$. The adversary can answer the first $k$ questions according to $x$ and the next $k$ questions according to $y$, thus after $2k$ questions both are still possible. 

$  $ \qquad $ $ \end{proof}

\medskip
\noindent\textbf{Proof of Claim \ref{kmore}.} Suppose we have already asked some consistent questions, i.e. there is an $x$ such that they are all true for $x$. However, if we do not yet know $f$, there is a $y$ for which at most $k$ of these questions would be false, such that $f(x)\ne f(y)$. We can answer the next $k$ questions according to $y$. \hfill $\Box$

\medskip To show the power of this simple claim, let us prove the following theorem.

\begin{thm}[Ravikumar et al. \cite{RGL}]\label{max}  To find the maximum among $n$ elements using comparisons of which $k$ might be incorrect, we need $(k+1)n-1$ comparisons in the worst case.
\end{thm}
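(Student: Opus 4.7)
The plan is to apply Claim~\ref{kmore}: I will exhibit an adversary strategy under which after $q=(k+1)(n-1)-1$ comparisons the transcript is still consistent and the maximum is not yet determined; Claim~\ref{kmore} will then force at least $k+1$ further comparisons, for a total of $q+k+1=(k+1)n-1$.

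The adversary I will use is arguably the simplest possible one: fix in advance an arbitrary linear order $\pi$ on the $n$ elements and answer every comparison consistently with $\pi$, never lying. Let $m$ denote the $\pi$-largest element and let $s(e)$ count the losses of $e$ so far; then $s(m)=0$ throughout, and $\sum_{e} s(e)$ always equals the number of comparisons asked. The key observation is that an element $e$ remains a viable candidate for the maximum iff $s(e)\le k$: up to $k$ losses can be blamed on lies, while $s(e)\ge k+1$ forces at least one truthful loss and rules $e$ out.

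After $q=(k+1)(n-1)-1$ comparisons, a pigeonhole step shows that some $e^{*}\ne m$ must still satisfy $s(e^{*})\le k$, since otherwise $\sum_{e\ne m}s(e)\ge (k+1)(n-1)>q$. To certify that both $m$ and $e^{*}$ are genuinely possible maxima I exhibit two total orders consistent with the transcript within the lie budget: namely $\pi$ itself, with maximum $m$, and the order $\pi'$ that puts $e^{*}$ on top followed by the other elements in their $\pi$-induced order, which disagrees with the transcript only on the $s(e^{*})\le k$ recorded losses of $e^{*}$ and has maximum $e^{*}$. Hence the state is consistent and the maximum undetermined, so Claim~\ref{kmore} finishes the proof. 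I do not expect a serious obstacle: the pigeonhole and the explicit construction of $\pi'$ are both routine, and Claim~\ref{kmore} does the real work of converting ``consistent and undetermined'' into ``$k+1$ more questions.''
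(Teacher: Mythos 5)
Your lower-bound argument is correct and is essentially the paper's own proof: the paper's adversary likewise answers the first $(k+1)(n-1)-1$ comparisons truthfully (hence consistently), observes that some element has never lost while, by exactly the counting you call pigeonhole, some other element has lost at most $k$ times, and then invokes Claim~\ref{kmore} to force $k+1$ further comparisons. Your explicit witness order $\pi'$ only makes the ``$e^{*}$ is still a possible maximum'' step more formal than the paper bothers to; fixing $\pi$ in advance versus answering consistently on the fly is an immaterial difference here. The one substantive omission is that the theorem, as the paper reads it (the introduction states that $(k+1)n-1$ comparisons are \emph{necessary and sufficient}), also asserts the matching upper bound, and the paper's proof supplies it: run any tournament scheme, comparing a pair until one of the two has beaten the other $k+1$ times (such an element cannot be the maximum, since at most $k$ of those answers are lies); this costs $(k+1)(n-1)$ comparisons plus at most $k$ extra ones caused by lies, i.e.\ $(k+1)n-1$ in total. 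Adding that one-line algorithmic half would make your write-up complete; as it stands you have proved only that $(k+1)n-1$ comparisons are necessary, not that they suffice.
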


\begin{proof} The upper bound follows from using any tournament scheme and comparing any two elements until one of them is bigger than the other $k+1$ times. This is $(k+1)(n-1)$ plus the possible $k$ lies that might prolong our search.

To prove the lower bound, answer the first $(k+1)(n-1)-1$ questions consistently. Now we have an element that was always bigger, and another that was the smaller one at most $k$ times, thus the conditions of Claim \ref{kmore} are satisfied, so we need $k+1$ more questions.
\end{proof}

\newpage
\section{Proof of Theorem 1}
We start with defining some standard terminology.
Define the actual {\em comparison graph} as a directed graph whose vertices are the elements and it has an edge for every comparison between the compared elements, directed from the bigger towards the smaller. We say that the comparison graph is {\em consistent} if there is no directed cycle in the comparison graph. In this case any vertex with in-degree at most $k$ can still be the maximum element and any vertex with out-degree at most $k$ can still be the minimum element.
We also denote the comparison graph after the first $t$ questions by $G_t$.
So if there are no lies among the first $t$ answers, then they are necessarily consistent and there is no directed cycle in $G_t$.

Now we prove Theorem 1, which states that $\left\lceil(k+1.5)(n-1)-0.5\right\rceil$ comparisons are needed to find the largest and the smallest element if there might be $k$ erroneous answers.

\medskip\noindent
\textbf{Proof of Theorem 1.} We have to give an adversary argument, i.e., for every possible comparing algorithm, we have to give answers such that it is not possible to determine with less than $(k+1.5)(n-2)+1$ questions the maximum and the minimum.
Our answers will be always consistent, i.e., that there will be no directed cycle in the comparison graph.

First, we suppose that $n$ is even and the (undirected) edges of $G_{n/2}$ (the graph of the first $n/2$ questions) form a perfect matching, i.e., every element is compared exactly once during the first $n/2$ comparisons. Denote the set of elements that were bigger in their first comparison by $TOP$ and the ones that were smaller by $BOTTOM$. Whenever in the future an element from $TOP$ is compared to an element from $BOTTOM$, we always answer that the one from $TOP$ is bigger. This way the problem reduces to finding the maximum from $n/2$ elements and the minimum from $n/2$ other elements. Every vertex but one from $TOP$ must have in-degree at least $k+1$ at the end and, similarly, every vertex but one from $BOTTOM$ must have out-degree at least $k+1$ at the end. Therefore after $n/2+2(k+1)(n/2-1)-1$ comparisons we still cannot know both the maximum and the minimum, and the answers we got are all consistent, thus we need $k+1$ more questions because of Claim \ref{kmore}. This implies that at least $(k+1.5)(n-1)-0.5$ comparisons are needed in the worst case.

In general, define the sets $TOP$ and $BOTTOM$ to be empty at the beginning and whenever an element is first compared, put it to $TOP$ if it is bigger and to $BOTTOM$ if it is smaller than the element it is compared to. Whenever we compare and element from $TOP$ with an element from $BOTTOM$, always the $TOP$ one will be bigger, so the maximum will be in $TOP$ and the minimum in $BOTTOM$.
At the end of the algorithm every element must be assigned to $TOP$ or $BOTTOM$. Denote the number of elements that are put to $TOP$ by $n_1$ and the number of the ones that are put to $BOTTOM$ by $n_2$ (so we have $n_1+n_2=n$). It is clear that there are at least $\left\lceil n/2\right\rceil$ questions that compare at least one element that was not compared before. Also note, that if we compare two elements one of which is not in $TOP$, then the in-degree of the vertices in $TOP$ will not increase. Therefore we need at least $(k+1)(n_1-1)$ comparisons inside $TOP$. We similarly need at least $(k+1)(n_2-1)$ comparisons inside $BOTTOM$.
Therefore after $\left\lceil n/2\right\rceil+(k+1)(n-2)-1$ comparisons we still cannot know both the maximum and the minimum, and the answers we got are all consistent, thus we can apply Claim \ref{kmore}. This implies that at least $\left\lceil(k+1.5)(n-1)-0.5\right\rceil$ comparisons are needed in the worst case. Note that this equals $\left\lceil(k+1.5)n\right\rceil-k-2$, which for $k=0$ is $\left\lceil 3n/2\right\rceil$, matching the best algorithm and the result of \cite{Po}. \hfill $\Box$



\subsection*{Acknowledgement}
I would like to thank the members of Gyula's search seminar, especially Dani and Keszegh, to listen to my attempts to prove the conjecture that I eventually ended up disproving.

The European Union and the European Social Fund have provided financial support to the project under the grant agreement no. T\'AMOP 4.2.1/B-09/1/KMR-2010-0003.


\bigskip
\rightline{\emph{Received: October 15, 2011 {\tiny \raisebox{2pt}{$\bullet$\!}} Revised: November 10, 2011}} 

\end{document}